\journalname{Journal of Applied and Computational Topology}
\begin{document}

\title{
    Approximate Nearest Neighbors \\
    in the Space of Persistence Diagrams\thanks{
        BTF is supported by
            NSF CCF 1618605,
            NSF ABI 1661530, and
            NIH/NSF DMS 1664858.
        XH is supported by China Scholarship Council under program
            201706240214 and
            by the Fundamental Research Funds for the
            Central Universities under Project 2012017yjsy219.
        ZL is supported by a Shandong Government Scholarship.
        SM is supported by NSF CCF 1618605.
        DLM is supported by NSF ABI 1661530.
        BZ is partially supported by NSF of China under project 61628207.
    }
}

%\titlerunning{Short form of title}        % if too long for running head

\author{
    Brittany Terese Fasy \and
    Xiaozhou He \and
    Zhihui Liu \and
    Samuel Micka \and \\
    David L.\ Millman \and
    Binhai Zhu
}

\authorrunning{B.T.\ Fasy, X.\ He, Z.\ Liu, S.\ Micka, D.L.\ Millman, and B.\ Zhu}

\institute{
    B.T. Fasy \at
        School of Computing and Dept.\ of Mathematical Sciences,
        Montana State University,
        Bozeman, MT, USA
        \email{brittany.fasy@montana.edu} \and
    X. He \at
        Business School,
        Sichuan University,
        Chengdu, Sichuan, China;
        and School of Computing, Montana State University,
        Bozeman, MT, USA
        \email{xiaozhouhe126@qq.com} \and
    Z. Liu \at
        School of Computer Science and Technology,
        Shandong Technology and
        Business University,
        Yantai, Shandong, China; and
        School of Computing,
        Montana State University, Bozeman, MT, USA
        \email{dane.zhihui.liu@gmail.com} \and
    S. Micka \at
        School of Computing,
        Montana State University, Bozeman, MT, USA
        \email{samuel.micka@msu.montana.edu} \and
    D. L.\ Millman \at
        School of Computing, Montana
        State University, Bozeman, MT, USA
        \email{david.millman@montana.edu} \and
    B. Zhu
        School of Computing,
        Montana State University, Bozeman, MT, USA
        \email{bhz@montana.edu}
}

\date{Received: date / Accepted: date}

\maketitle

\begin{abstract}
Persistence diagrams are important tools in the field of topological data analysis
that describe the magnitude of features in a filtered topological
space. However, we show that the doubling dimension of $(M,m)$-bounded
persistence diagrams is infinite and, as a result,
current approaches for comparing a persistence diagram to a
set of other persistence diagrams is linear in the number of diagrams or do
not offer performance guarantees.
In this paper, we provide the first approach supporting
approximate nearest neighbor search in the space of persistence diagrams
using the bottleneck distance.
Given a set~$\Gamma$ of $n$ $(M,m)$-bounded persistence diagrams,
each with at most $m$ points, we snap-round the points of each diagram to
specific points on a uniform grid and produce a key for each possible snap-rounding.
Then, we propose a data structure with~$\tau$ levels $\bigds$ storing
all snap-roundings of each persistence diagram in~$\Gamma$ at each resolution.
This data structure has size
$\spcone$ to account for varying grid resolutions, snap-roundings, and
the deletion of points with low persistence.
To search for a persistence diagram, we compute a key for a query
diagram by snapping each point to a grid point and deleting points of low
persistence.
Furthermore, as the grid parameter decreases,
searching our data structure yields a six-approximation of the nearest
diagram in $\Gamma$ in~$\qtone$~time and twenty-four
approximation of the $k$th nearest diagram in $\qtknn$ time.

\end{abstract}

\section{Introduction}
Computational topology is a field at the intersection of mathematics (algebraic
topology) and computer science (algorithms and computational geometry).
In recent years, the use of techniques from computational topology in application
domains has been on the rise~\cite{ahmed2014local, giusti2015clique, le2015simplicial}.
Furthermore, persistence diagrams can be used
to reconstruct different types of simplicial complexes, which can be used to represent
geometric objects and point clouds~\cite{turner2014persistent, belton2018learning}.
These results provide new avenues to explore object classification
and recognition in new and enlightening ways.
More generally, current research is applying techniques from computational
topology to big data.
%BTF: removed b/c not sure if relevant.`
%As a result, recent algorithmic developments
%in computational topology have included applying
%statistical techniques to this domain~\cite{bubenik2010statistical, fasy2014confidence}.
Computing the distance from a query diagram to a set of persistence diagrams
using a linear number of computations, however,
can be computationally expensive since computing the bottleneck distance
between two diagrams requires $O(m^{1.5}\log m)$ time for
$m$ points \cite{Kerber16}.
To address the expense, preliminary work by Kerber and
Nigmetov~\cite{mk2018spanners} looked at understanding the space of persistence
diagrams through building a cover tree of a set of diagrams but are not able to
offer worst-case performance guarantees due to the doubling dimension.
To reduce the complexity of comparing a query diagram to a set of
diagrams, Fabio and Ferri represented persistence diagrams as
complex polynomials and compared the persistence diagrams using
complex vectors storing coefficients for the polynomials \cite{di2015comparing}.
The research in \cite{di2015comparing}, however, is experimental and offers no
performance guarantees on the distance between two diagrams
deemed to be close to one another by comparing the complex vectors.
The idea of converting persistence diagrams to complex vectors was
extended by Wang et al. in 2019 \cite{wang2019polynomial}. Wang et al.
prove stability results of the vector representation. Specifically,
for two diagrams~$D_1$ and $D_2$, the distance between the
resulting vectors $v_{D_1}$
and $v_{D_2}$ has the following upper-bound:
$$
	||v_{D_1} - v_{D_2}||_1 \le \sqrt{2}\large( 1 + \frac{\sqrt{\pi}}{\sigma}
	\large)d_{W,1}(D_1, D_2),
$$
where $d_{W,1}$ denotes the $1$-Wasserstein distance and $\sigma$ is a
variance parameter~\cite[Theorem 1]{wang2019polynomial}.
However, without a lower-bound on the distance, this result is not
applicable for guaranteeing the distance between a returned diagram and
the nearest neighbor.
We address a the problem of near-neighbor searching,
answering near neighbor queries in the space of persistence diagrams,
and providing a means of querying for near diagrams with
performance~guarantees.

Nearest neighbor search is a fundamental problem in computer science, i.e.,
databases, data mining and information retrieval, etc. The problem was
posed in 1969 by Minsky and Papert \cite{MP69}. For data in low-dimensional
space, the problem is well-solved by first computing a search structure
on the Voronoi diagram of the data points and then performing point
location queries for query~points \cite{Ed87}.
When the dimension is large, such a method is known to be
impractical as the query time typically has a constant factor which is
exponential in dimension (known as the ``curse of dimensionality'')~\cite{Cl94}.
Then,
researchers resort to approximate nearest neighbor search~\cite{arya1998ann}.

In many applications, for approximate nearest neighbor queries, the data in
consideration are not necessarily (high-dimensional) points in Euclidean space. In 2002,
Indyk considered the data to be a set of $n$ polygonal curves (each with
at most $m$ vertices) and the distance between two curves is the discrete
Fr\'echet distance. In particular, a data structure of
exponential size was built so that
an approximate nearest neighbor query (with a factor $O(\log m+\log\log n)$)
can be done in $O(m^{O(1)}\log n)$ time~\cite{Indyk02}. Most recently,
Driemel and Silvestri used locality-sensitive hashing to answer near neighbor
queries (within a constant factor) in $O(2^{4md}m\log n)$ time using $O(2^{4md} n \log n + nm)$ space (this bound is practical only for some $m=O(\log n)$) \cite{DS17}.
In the case of persistence diagrams under the bottleneck distance, no results
exist on finding the nearest neighbor or even a near neighbor efficiently
while offering performance guarantees without computing pairwise
bottleneck distances.

\paragraph*{Our Contributions}
We study the near neighbor search and the $k$-near neighbor problems
in the space of persistence diagrams under the bottleneck distance.
As observed by Kerber and Nigmetov~\cite{kerber2019metric},
the significant research in approximate and exact nearest neighbor
searching~\cite{Indyk98, HIM12, Indyk02, arya1998ann, andoni2015optimal,
derryberry2008achieving,chan2002closest,beygelzimer2006covertree,mk2018spanners}
does not carry over to the space of persistence
diagrams with the bottleneck metric.
Specifically, Kerber and Nigmetov observed in~\cite{kerber2019metric}
and \cite{mk2018spanners} (and
in~\thmref{doubling} we show for a more restricted setting) that the space of
persistence diagrams with the bottleneck metric has infinite doubling
dimension.
As discussed by Clarkson's~\cite{clarkson2006nnmetric}(page 30),
general divide-and-conquer algorithms over metric spaces have a running time
that is exponential in the doubling dimension.
As such, \thmref{doubling} implies that
searching approaches that decompose space, such as cover trees or quad trees,
may have nodes with an infinite number of children and are not practical.
Kerber and Nigmetov tested cover trees on sets persistence diagrams
clustered around sufficiently spaced centers and data from the McGill
shape benchmark\footnote{\url{http://www.cim.mcgill.ca/~shape/benchMark/}}
and found that cover trees reduced the
number of computations but still acknowledge the inability to
offer worst-case performance guarantees \cite{mk2018spanners}.

This work also draws similarities to hashing and searching with bottleneck
distance for multi-point
datasets in two-dimensions~\cite{heffernan1994approximate,
efrat1996llmprovements,indyk2003approximate,akutsu1998determining,
benkert2012approximate}.
However, most existing approaches require that the points sets have the same
cardinality, an assumption not necessary when computing
the bottleneck distance between persistence diagrams.
For persistence diagrams, equal cardinality comes from the diagonal having
infinite multiplicity, which allows off diagonal points to
be mapped to the diagonal.
While some approaches, such as \cite{indyk2003approximate,yon2017finding},
overcome the equal cardinality assumption,
the output of their approaches is not necessarily a bottleneck matching
and not applicable when comparing persistence diagrams.
As such, the crux of this research is managing matchings of points with the
diagonal.

Up until now, the fastest known approach with performance guarantees
for searching in the space of persistence diagrams was to compute the
bottleneck distance from the query diagram to all diagrams in the data set using
the methods introduced by Kerber et al.~\cite{Kerber16}.
In this paper, we present the first efficient solution to
searching in the space of persistence diagrams
with performance guarantees.
We summarize our results as follows. Given a set~$\Gamma$ of~$n$
$(M,m)$-bounded persistence diagrams, we propose a key
function that produces a set of keys for each of the $O(5^m)$
snap-roundings of each diagram in $\Gamma$ to grid points.
The keys are stored in a data structure~$\bigds$, with $\tau$ levels
using~$\spcone$ space.
This data structure is capable of answering
queries of the form: given a query persistence diagram $Q$ with
at most~$m$ points, return a six-approximation of the nearest neighbor
to $Q$ in $\Gamma$ in $\qtone$ time or return a set of $k$ diagrams, each of which
being a twenty-four factor approximation of the $k$th nearest diagram,
in $\qtknn$ time.

\section{Preliminaries}
In this section, we give necessary definitions for persistence diagrams,
bottleneck distance and additional concepts used throughout
this paper. We assume that the
readers are familiar with the basics of algorithms \cite{CLRS01}.

%%%%%%%%%%%%%%%%%%%%%%%%%%%%%%%%%%%%%%%%%%%%%%%%%%%%%%%%%%%
\subsection{Persistence Diagrams and Bottleneck Distance}\label{ss:persistence}
%%%%%%%%%%%%%%%%%%%%%%%%%%%%%%%%%%%%%%%%%%%%%%%%%%%%%%%%%%%

\emph{Homology} is a tool from algebraic topology that describes the so-called
\emph{holes} in topological spaces by assigning the space an abelian group for
each dimension.
% computational / persistent:
When we are not given an exact topological space, but an estimate of it, we need
to introduce some notion of scale.  If each scale parameter~$\tau$ is assigned a
topological space $X_{\tau}$ such that $X_{\tau}$ changes nicely with~$\tau$,
then we track these changes using \emph{persistent homology}.
For further details on classical homology theory, the readers are referred
to \cite{ATH,munkres:at} for homology and \cite{EH10} for persistent homology.
In this paper, we are working in the \emph{space of persistence diagrams} under
the bottleneck distance (which
we will make more precise next), and are not concerned with where these diagrams
came from.

Persistent homology tracks the \emph{birth} and
\emph{death} of the
topological features (i.e., the connected components, tunnels, and
higher-dimensional `holes') at multiple scales.
A {\em persistence diagram} summarizes this information by representing the birth
and death times ($b$ and~$d$, respectively) of homology generators as points  $(b,d)$ in the extended
plane.
These birth and death times correspond to the appearances and merging
of topological features in a filtered topological space and, as a result, are
not necessarily positive.
We comment that we could also represent a persistence diagram as a set of
half-open intervals
(barcodes) in the form of $[b,d)$ as
in~\cite{zomorodian2005computing,chazal2016structure}.
We focus on the former representation and lay a grid over the diagram (see
\ssecref{lattices}) and consider the $L_\infty$ distance.
Let~$\diag$ denote the diagonal (the line $y=x$)
with points on~$\diag$ represented with infinite multiplicity.
Notice that points
with small persistence are close to the diagonal
and points with high persistence are far from the diagonal;
in particular, the point~$(b,d)$ has
distance~$\frac{1}{2}|d-b|_\infty$ from $D$.
In what follows, we set $M,m >0$ and consider diagrams with at
most $m$ off-diagonal points such that each
off-diagonal point $(a,b)$ satisfies $|a| \leq M$ and
$|b| \leq M$ or $|b| = \infty$.
We call such persistence
diagrams $(M,m)$-bounded; see \figref{diagram}.

\begin{figure}[htbp]
\begin{center}
\includegraphics[width=0.40\textwidth]{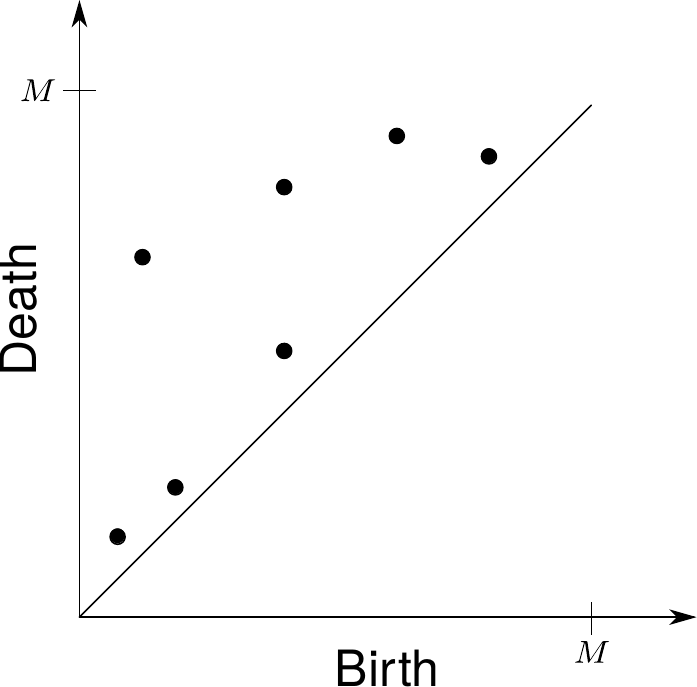}
\end{center}
    \caption[Example of an $(M,7)$-bounded persistence diagram]
    {An example of an $(M,7)$-bounded persistence diagram. Each point $p=(b,d) \in P
    \backslash D$
represents a topological feature--in particular, a homology generator. A point
    that
is close to the diagonal (i.e., has small persistence), cannot be easily
    distinguished from \emph{topological noise}.
Standard persistence has points above the line $y=x$; however, extended
persistence allows points above or below the diagonal.
}\label{fig:diagram}
\end{figure}

Given persistence diagrams $P$ and $Q$,
the {\em bottleneck distance} between them~is:
\[d_B(P,Q)=\inf_{\phi} \sup_{p\in P}\|p-\phi(p)\|_{\infty},\]
where the infimum
is taken over all bijections $\phi:P\rightarrow Q$.
Notice that such an infimum exists, since $||\cdot||_{\infty}$ is nonnegative and
there exists at least one bijection~$\phi$ with finite bottleneck distance
(namely, the one that matches every~$p$ in $P \backslash \diag$ to $\phi(p)$ and
every $d$ in~$\diag$ gets matched to
itself).
Let $\proj \colon \R^2 \to \diag$ be the orthogonal projection of a point~$p \in
\R^2$ to the closest point on $\diag$.
We define a {\em matching} between the points of~$P$ and $Q$
to be a set of edges such that no point in $P$ or $Q$ appears more than once.
We interpret these edges as pairing a point $p \in P$ with either off-diagonal point $q\in Q$
or $\proj(q)$, and a point $q\in Q$ with either points an off-diagonal point $p\in P$ or $\proj(q)$.
Furthermore, a matching is {\em perfect}
if every $p\in P$ and $q\in Q$ is matched,
i.e., every point is paired with the diagonal or a point from the other diagram and
every point has degree one, see \figref{matching} for an example.
Letting $P_0$ and $Q_0$ be the sets of off-diagonal points in $P$ and $Q$,
respectively.  Then, if $d_B(P,Q)= \varepsilon$, a perfect matching $M$ between $P'=P_0 \cup
\proj(Q_0)$ and $Q' = Q_0 \cup \proj(P_0)$ exists such that the length of each
edge in $M$ is at most $\varepsilon$; again, see \figref{matching}. In this
light, computing the
bottleneck distance is equivalent to finding a perfect matching between~$P'$ and~$Q'$ that
minimizes the length of the longest edge; see~\cite[\S VIII.4]{EH10}
and~\cite{Kerber16}, which use results from graph matching~\cite{lovasz2009matching,hungarian,EIK01,HK73}.
The space of persistence diagrams under the bottleneck distance is, in fact, a
metric space. Throughout this paper,
we use $\dgms$ to denote the (metric) space of $(M,m)$-bounded persistence
diagrams under the bottleneck distance.

\begin{figure}[htbp]
    \begin{center}
        \includegraphics[width=0.40\textwidth]{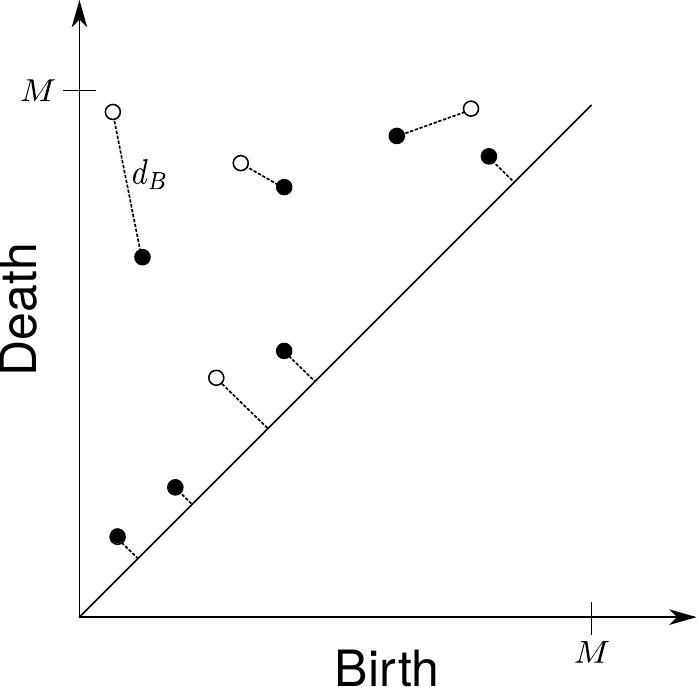}
    \end{center}
    \caption[Perfect matching between two persistence diagrams]
    {An example of a perfect matching between diagrams $P$ (the solid black points) and
    $Q$ (the circles).  Notice that $P$ and $Q$ have a different number of
    off-diagonal points.  However, since $\diag \subset P$ and $\diag \subset
    Q$, points in either diagram may match with the diagonal. A key insight in the algorithm for
    computing $d_B$ is that if a point $p$ is matched to a point on~$D$, then we
    can simply match $p$ with $\proj(p)$.
    }\label{fig:matching}
\end{figure}

Next, we prove a property about the space $\dgms$ with the bottleneck
metric known as the \emph{doubling dimension}.
Following Clarkson's definition of the doubling space \cite{clarkson2006nnmetric}:

\begin{definition}[Doubling Dimension]
    Let $X$ be a metric space with metric~$d$.
    The space $X$ is said to be \emph{doubling} if there exists
    a constant integer~$C>0$ such that for every $x \in X$ and
    any $r>0$ the open ball $B(x,r)$ can be covered by at most $2^C$
    balls of radius $\frac{r}{2}$.
    The $\emph{doubling constant}$ of $X$ is $2^C$
    and the $\emph{doubling dimension}$ of $X$ is $C$.
\end{definition}

Then, the following theorem highlights the problem
with employing search approaches dependent on decomposing the search space,
such as cover trees. We note that similar observations were
made about persistence diagrams (not~$(M,m)$-bounded)
under the bottleneck distance in
\cite{mk2018spanners, kerber2019metric}.

\begin{theorem}\label{thm:doubling}
    The space of $(M,m)$-bounded persistence diagrams
    under the bottleneck distance has infinite doubling dimension.
\end{theorem}
\begin{proof}
    Assume, for contradiction, that there exists a constant $C$ such
    that for any positive radius $r$ and all $P\in \dgms$ that $B(P,r)$ can
    be covered by $2^C$ balls of radius~$\frac{r}{2}$.
    Let~$r = \frac{M}{2^Cm}$ and let $P\in \dgms$ have exactly
    one point $\frac{r}{2}$ from the diagonal.
    Let~$\mathbb{B}$ be the region greater than or equal to $\frac{r}{2}$
    but strictly less than $r$ from the diagonal.
    Note that~$\mathbb{B}$ can be decomposed into~$\frac{2M}{r}$ disjoint boxes
    with edge length $\frac{r}{2}$, we denote this decomposition as~$\mathbb{B}'$.
    Let~$\mathbb{D}$ be the set of diagrams with $m$ points,
    where each of the $m$ points lie at the center of a different box
    in~$\mathbb{B}'$.
    Each~$Q\in \mathbb{D}$ has $m$ points lying at $m$ of
    $\frac{2M}{r} = 2^{C+1}m$ possible
    locations. For any~$Q,Q' \in \mathbb{D}$, where $Q\neq Q'$ we observe
    that~$d_B(P,Q) < r$, $d_B(P,Q') < r$, and $d_B(Q,Q') \ge \frac{r}{2}$.
    In other words, we must cover every diagram in $\mathbb{D}$ but
    each requires a separate open ball of radius~$\frac{r}{2}$. However,
    \[2^C < 2^{C+1} \le  {2^{C+1}m \choose m} = {\frac{2M}{r} \choose m} =
    |\mathbb{D}|,\]
    a contradiction to the claim that $B(P,r)$ can be covered
    by $2^C$ balls of radius~$\frac{r}{2}$.
\end{proof}

\subsection{Uniform Grid}\label{ss:lattices}
A $M$ bounded \emph{uniform grid} in~$\R^2$ is a Cartesian grid with
$\eta ^2$ grid
cells all contained in the box $[-M, M]^2$ and is denoted~$\lattice{M}{\eta}$.
We include $-M$ in the bounding box because the coordinates of points
in the persistence diagrams are not necessarily positive.
Grid cells are defined by four grid edges and four grid points.
The \emph{grid parameter}, denoted~$\delta = 2M/\eta$,  is the cell width of each grid cell.
Since we are working with persistence diagrams, which may have points at
$\infty$, we include two additional one-dimensional grids to handle points with
a single infinite coordinate and one additional zero-dimensional grid
cell to handle points with two infinite coordinates.
The \emph{complexity} of the grid is the number of grid cells:
$|\lattice{M}{\eta}|=\eta^2 + 2\eta + 1$.
For simplicity of exposition, we assume that no input points lie on either grid
edges or equidistant from any two grid points. Thus, nearest
grid points in the grid are unique, and every point has exactly four grid
points defining the grid cell containing it.

\section{Generating and Searching Persistence Keys}
\label{sec:hashing}
In this section, we define a key function that maps a persistence diagram in $\dgms$ to a
vector in~$\Zpos{a}$ where the exponent $a$ is a function of the
number of points in the grid.
Hence, as $a$ increases, the keys become more discerning.
We order the diagrams using the
dictionary order on $\Zpos{a}$, and store the keys in a multilevel
data structure that supports binary search.
We note here that the hierarchical grid is adapted from
approaches to locality-sensitive hashing~\cite{Indyk98,HIM12,heffernan1994approximate}.
More recent general results on locality-sensitive can be found in~\cite{andoni2015practical,andoni2015optimal,christiani2017framework, anagnostopoulos2018algorithms, andoni2015tight}.

Let $M,m >0$ and $\eta \in \Z_+$.
Let $P\in \dgms$ be a persistence diagram.
 We consider the grid~$\lattice{M}{\eta}$.
We then snap each off-diagonal point $p \in \nodiag{P}$ to a grid point
$\rho_i \in \lattice{M}{\eta}$
and count the multiplicity $\pi_i$ for each grid point.
The number of grid points from two-dimensional grid cells
is $(\eta+1)^2$, the number of grid points from one-dimensional grid
cells is $2(\eta+1)$, and there is a single zero dimensional grid cell
with one point.
We note that while our key function was inspired by the hash function
of~\cite{DS17} that ignored multiplicities, we must count the multiplicity of
duplicated grid~points.

Recall from \ssref{lattices} that the grid
parameter is $\delta=2M/\eta$.  We define our key function~$\keyfcn \colon
\dgms \times \mathbb{G} \to \Z^a$ by:
\[ \key{P}{\lattice{M}{\eta}} = \sum_{p \in \nodiag{P}} e_{nn(p)},\]
where $\mathbb{G}$ denotes the set of all uniform grids, $nn(p)$ maps each
off-diagonal~$p \in P$ to the index of the nearest grid point and
$e_i$ is the $i^{th}$ standard basis vector in $\Z^{a}$, where
$a=(\eta+1)^2 + 2(\eta + 1) + 1$; see \figref{snapping} for an example.
For simplicity of the proofs to follow and so $\keyfcn$ is well-defined,
we assume that no persistence point lies on a grid edge of $\lattice{M}{\eta}$
or equidistant to any two grid points.
Of course, since many coordinates of
$\key{\cdot}{\cdot}$ are zero, we store it using a sparse vector representation; moreover,
for the empty diagram~$D$ (i.e., with no point but the line $y=x$), we notice
that~$\key{D}{\cdot}=0$.
\begin{figure}[tb]
\begin{center}
\includegraphics[width=0.33\textwidth]{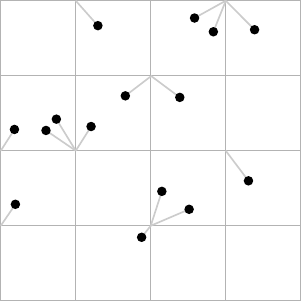}
\end{center}
\caption[Key function and snap-rounding points to a grid]
{The key function snap-rounding each point of the input set to the
nearest grid point. Note that the actual rounding produced by the key
function is denoted as a line from each point to the corresponding grid
point that it is rounded to.
}\label{fig:snapping}
\end{figure}

\begin{remark}
    This vector could also correspond to a product of  prime numbers, where
$\{\sigma_j\}$ is an ordered set of $a$ prime numbers.  Then, we have a unique
integer $\prod_j \sigma_j^{v_j}$ for each vector $v=(v_1, v_2, \ldots, v_{a})
\in \Z^{a}$.
Doing so would put us in the more conventional setting, where we
have indices into a hash table instead of keys.  However, we would then either need to have a
pre-generated list of $a$ primes (which adds to our storage space) or must
account for computing the primes (which adds time complexity).
\end{remark}

Suppose we snap-round $P$ before applying $\keyfcn$; a natural choice for
rounding each $p \in P$ would be to one of the four grid points defining
the grid cell containing $p$.  Formally:

\begin{definition}[Snap Sets and Canonical Ordering]\label{def:ordering} Given
    a grid $\nLattice$ with grid parameter~$\delta$, let
    $\snap{P}{\nLattice}$ denote the set of all possible snap-roundings of
    $P$ obtained by allowing each~$p \in P$ to snap-round to one of the grid
    points within $L_\infty$ distance $\delta$ of $p$, i.e., one of the
    grid points bounding the cell containing $p$.
    For example, points with finite coordinates lie within $\delta$ of
    four grid points, points with one infinite coordinate lie within
    $\delta$ of two grid points, and points with two
    infinite coordinates lie within $\delta$ of one grid point.
    Let $\delsnap{P}{\nLattice}$ denote
    the set of all snap-roundings of $P$ obtained by additionally allowing
    $p \in P$
    distance less than or equal to $\delta$ from the diagonal to be optionally deleted; see
    \figref{noisydgm} for an example of points that are eligible for removal.
\end{definition}

\begin{figure}[ht]
    \begin{center}
        \includegraphics[width=0.40\textwidth]{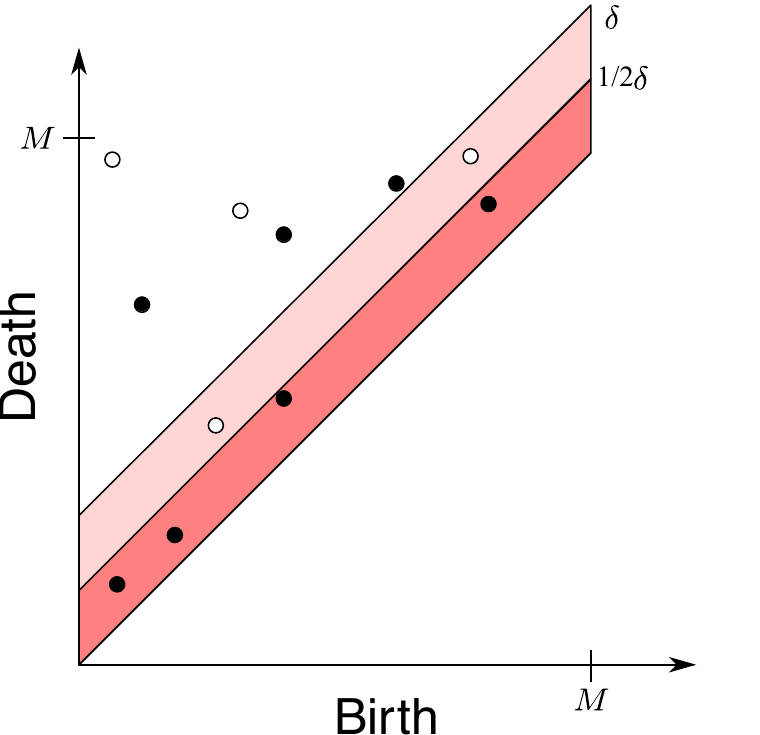}
    \end{center}
    \caption[Thresholds for point deletion near diagonal]
    {
        An example demonstrating the thresholds where points near the
        diagonal may be deleted. We consider diagrams $P,Q \in \dgms$
        denoted with black and white points, respectively. The two thresholds
        are shown as bands above the diagonal denoting distances
        $\frac{1}{2}\delta$ and
        $\delta$ from the diagonal. Removing all black points within
        the first threshold will produce the diagram~$\widetilde{Q}$ from $Q$.
        Removing some subset of white points within the second threshold
        may produce~$P_*$ from $P$.
    }
    \label{fig:noisydgm}
\end{figure}

For the remainder of this
paper, without loss of generality, we only
consider points with four snap-roundings since points infinite coordinate
values follow the same argument as points with a finite death time that
are not too ``near the diagonal'', but with fewer snap-roundings.
Each point~$p \in P$, not lying on a grid edge, can snap to the four
grid points defining the grid cell containing~$p$. Then, we bound the
number of snap-roundings for a fixed diagram:

\begin{lemma}[Enumerating Keys]\label{lem:enumerating}
    If $P \in \dgms$ and if $\nLattice$ is a uniform grid centered on
    $[-M,M]^2$, then
    the number of keys in $\snap{P}{\nLattice}$ and
    $\delsnap{P}{\nLattice}$ have the following upper bounds:
    \begin{itemize}
        \item $\snap{P}{\nLattice}$ has size~$O(2^{2m}) = O(4^m)$.
        \item $\delsnap{P}{\nLattice}$ has size~$O((2^2+1)^m) = O(5^m)$.
    \end{itemize}
\end{lemma}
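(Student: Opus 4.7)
\medskip

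\noindent\textbf{Proof proposal.} The plan is a direct combinatorial counting argument based on the per-point choices allowed by \defref{ordering}. Fix the lattice $\nLattice$ with parameter $\delta = 2M/\eta$, and recall that by the general-position assumption each $p \in \nodiag{P}$ lies strictly inside some grid cell of $\nLattice$, so the four lattice points defining that cell are the only lattice points within distance $\delta$ of $p$ (in $\ell_\infty$ and certainly in $\ell_2$ up to the diagonal of the cell, which is what the definition requires once one checks the constant). Hence each off-diagonal point has at most $4$ legal snap-targets.

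For the first bullet, I would observe that a snap-rounding of $P$ is determined by a function assigning to each of the at most $m$ off-diagonal points one of its (at most) four candidate lattice points. Different functions may produce the same multiset $\snap{P}{\nLattice}$ element, so the count $4^m$ is an upper bound rather than an equality, which is exactly what the statement claims. Thus $|\snap{P}{\nLattice}| \le 4^m = 2^{2m}$, and writing it as $O(4^m)$ absorbs the fact that $P$ may have strictly fewer than $m$ off-diagonal points.

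For the second bullet, I would extend the per-point choice space. By \defref{ordering}, a point $p \in \nodiag{P}$ whose distance to $\diag$ is at most $\delta$ has an additional option: deletion. So each point has at most $4+1 = 5$ options (four snap-targets, or delete), while points further than $\delta$ from $\diag$ still have at most $4$ options, which is bounded above by $5$. Multiplying across the at most $m$ off-diagonal points gives $|\delsnap{P}{\nLattice}| \le 5^m$, as required.

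The argument is essentially bookkeeping, and the only mild subtlety I anticipate is justifying that ``the four lattice points defining the grid cell'' exhaust the legal snap-targets under the definition of $\snap{P}{\nLattice}$ (which uses the cutoff ``within distance $\delta$''). This is handled by the general-position assumption from \ssref{lattices} together with the fact that from any interior point of a $\delta \times \delta$ cell only the four cell-corners lie within the prescribed radius; once this is noted, the count multiplies cleanly to $4^m$ and $5^m$ respectively. No summation over subsets of deleted points is needed, since the factor of $5$ already encodes the delete-or-snap choice pointwise.
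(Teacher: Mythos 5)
Your proof is correct, and it takes the same direct combinatorial approach the paper implicitly relies on: the paper does not actually include a written proof of this lemma, instead treating it as immediate from the sentence just before the statement (``Each point $p \in P$, not lying on a grid line, can snap to the four lattices defining the grid cell containing $p$.''). Your product-of-per-point-choices argument, with the factor $5 = 4 + 1$ absorbing the snap-or-delete option, is exactly the intended reasoning, and your remark that the count is only an upper bound (distinct choice functions may produce the same snapped multiset) is a correct and welcome clarification that the paper leaves tacit.
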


% Proof is obvious, so we skip.

We are now almost ready to prove \thmref{collision}, which shows that a query
diagram $Q$ collides with a snapping of diagram $P$ if and only if $P$ and $Q$
are close diagrams.  (Note that this `if and only if' statement uses
asymmetric notions of \emph{close}).  We first prove a
simplified version in \lemref{boundedcollision}, where we consider the perfect
matching problem in the extended plane $\Rbar^2$.  In this case, the proof is
made
easier as the two diagrams necessarily have the same number of points (as
otherwise, a perfect matching is not possible).  Then, to prove
\thmref{collision},
we delete points that are less than or equal to distance
$\frac{1}{2}\delta$ of $D$ in the query
diagram $Q$ and observe
that some of these points could have been matched with off-diagonal points in
$P$ with persistence up to, and including,~$\frac{3}{2}\delta$.
We note that, for the sake of space, we include all proofs in the appendix.

%%%%%%%%%%%%%%%%%%%%%%%%%%%%%%%%%%%%%%%%%%%%%%%% COLLISION LEMMA
\begin{lemma}[Collision without Diagonal
Interference]\label{lem:boundedcollision}
    Let $P,Q \subset \Rbar^2$ be finite $(M,m)$-bounded point clouds.
    Let $\eta \in \Z_+$ and consider the
grid~$\nLattice=\lattice{M}{2\eta}$.
    Let $\delta$ denote the grid parameter.
    Then,

    \begin{enumerate}
        \item if
    $ \exists P_* \in \snap{P}{\nLattice}$
such that
            $\key{P_*}{\nLattice}=\key{Q}{\nLattice}$, then
            $d_B(P,Q) \leq \frac{3}{2} \delta$;
        \label{lemPart:snapImpliesBound}
        \item if $d_B(P,Q) \le \frac{1}{2} \delta$, then
    $\exists P_* \in \snap{P}{\nLattice}$ such that
            $\key{P_*}{\nLattice}=\key{Q}{\nLattice}$.
        \label{lemPart:boundImpliesSnap}
    \end{enumerate}
\end{lemma}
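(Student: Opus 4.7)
The plan is to route the matching through the lattice, exploiting the fact that the key function records the multiset of nearest-neighbor snaps of a point cloud. I would first pin down the pertinent geometric fact about $\nLattice$: for any point $r\in\R^2$ (not on a Voronoi edge), the nearest lattice point $nn(r)$ lies within $\ell_\infty$ distance $\tfrac{1}{2}\delta$, and the only lattice points within $\ell_\infty$ distance at most $\delta$ of $r$ are exactly the four corners of the grid cell containing $r$---i.e., exactly the choices available in $\snap{\{r\}}{\nLattice}$. The asymmetric bounds $\tfrac{1}{2}\delta$ from $\keyfcn$ and $\delta$ from the broader snap family are what will combine to yield the constants $\tfrac{1}{2}\delta$ and $\tfrac{3}{2}\delta$ appearing in the two parts.

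For Part~\ref{lemPart:snapImpliesBound}, the equality $\key{P_*}{\nLattice}=\key{Q}{\nLattice}$ forces the multiset of lattice points selected by the snap-rounding $P_*$ to coincide with $\{nn(q):q\in Q\}$, so in particular $|P|=|Q|$. Grouping the points of $P$ and $Q$ by their associated lattice point, I pair them off arbitrarily within each group to build a bijection $\phi:P\to Q$. If $p$ and $\phi(p)$ share the lattice point $\rho$, then
\[
\|p-\phi(p)\|_\infty \le \|p-\rho\|_\infty + \|\rho-\phi(p)\|_\infty \le \delta + \tfrac{1}{2}\delta = \tfrac{3}{2}\delta,
\]
so $d_B(P,Q)\le \tfrac{3}{2}\delta$.

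For Part~\ref{lemPart:boundImpliesSnap}, the hypothesis provides a bijection $\phi:P\to Q$ with $\|p-\phi(p)\|_\infty\le \tfrac{1}{2}\delta$ for every $p$. I construct $P_*$ by sending each $p\in P$ to $nn(\phi(p))$. Validity, i.e.\ $P_*\in\snap{P}{\nLattice}$, reduces to checking that $nn(\phi(p))$ is one of the four corners of $p$'s cell; this follows from the triangle inequality
\[
\|p - nn(\phi(p))\|_\infty \le \|p-\phi(p)\|_\infty + \|\phi(p)-nn(\phi(p))\|_\infty \le \tfrac{1}{2}\delta + \tfrac{1}{2}\delta = \delta
\]
together with the geometric fact above. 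The key identity $\key{P_*}{\nLattice}=\key{Q}{\nLattice}$ is then immediate from $\phi$ being a bijection, as both sides equal $\sum_{q\in Q}e_{nn(q)}$.

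The main obstacle I anticipate is not the algebraic content but keeping the two distance scales cleanly separated: the worst-case bound $\delta$ inherent in $\snap{P}{\nLattice}$ versus the tighter bound $\tfrac{1}{2}\delta$ inherent in nearest-neighbor snapping inside $\keyfcn$. Conflating them would loosen Part~\ref{lemPart:snapImpliesBound} to $2\delta$ or break the validity check in Part~\ref{lemPart:boundImpliesSnap}. The elementary geometric claim---that exactly the four containing-cell corners lie within $\ell_\infty$ distance $\delta$ of a generic point---is load-bearing, and confirming it (together with the Voronoi-genericity assumption stated earlier) is the only detail I expect to require explicit care.
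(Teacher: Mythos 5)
Your proof is correct and follows essentially the same approach as the paper's: both parts route the argument through the shared lattice point, using the asymmetry between the $\delta$ bound for the four-corner snap set and the $\tfrac{1}{2}\delta$ bound for the nearest-neighbor snap in $\keyfcn$, combined via the triangle inequality. The one place you are slightly more explicit than the paper is in spelling out the load-bearing geometric fact (that the four cell corners are exactly the lattice points within $\ell_\infty$ distance $\delta$ of a generic point), which the paper leaves implicit via Lemma~\ref{lem:enumerating}.
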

%%%%%%%%%%%%%%%%%%%%%%%%%%%%%%%%%%%%%%%%%%%%%%%% COLLISION LEMMA

The above lemma is restricted to matchings of points in $\Rbar^2$.
Next, we generalize \lemref{boundedcollision},
by allowing matchings to the diagonal. This is the
central theorem of this~paper.

%%%%%%%%%%%%%%%%%%%%%%%%%%%%%%%%%%%%%%%%%%%%%%%% COLLISION THEOREM
\begin{theorem}[Collisions between Diagrams]\label{thm:collision}
    Let $P,Q \in \dgms$.
    Let~$\eta \in \Z_+$ and consider the grid~$\nLattice =
\lattice{M}{2\eta}$.
    Let $\delta$ denote the grid parameter,
    and let $\widetilde{Q}$ be the diagram obtained from $Q$ by removing
            all points less than or equal to distance $\frac{1}{2}\delta$ from the diagonal.~Then:
            \begin{enumerate}
                \item If $ \exists P_* \in
\delsnap{P}{\nLattice}$
                    such that
                    $\key{P_*}{\nLattice}=\key{\widetilde{Q}}{\nLattice}$, then
                    $d_B(P,Q) \leq \frac{3}{2} \delta$.
                \label{thmPart:snapImpliesBound}
                \item If $d_B(P,Q) \le \frac{1}{2} \delta$, then
$\exists P_* \in \delsnap{P}{\nLattice}$
                    such that
                    $\key{P_*}{\nLattice}=\key{
                        \widetilde{Q}}{\nLattice}$.
                \label{thmPart:boundImpliesSnap}
            \end{enumerate}
\end{theorem}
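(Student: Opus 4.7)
The plan is to mirror the two-part structure of \lemref{boundedcollision}, using the extra freedom in $\delsnap{\cdot}{\nLattice}$ (allowing points within $\delta$ of the diagonal to be dropped) as the counterpart of the extra freedom in the bottleneck matching to send a point to its projection $\proj(p)$ onto $\diag$. Throughout, I will let $\match$ denote either the matching I construct (for part \ref{thmPart:snapImpliesBound}) or an optimal bottleneck matching (for part \ref{thmPart:boundImpliesSnap}), and use the triangle inequality in $\ell_\infty$ as the main tool.

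For part \ref{thmPart:snapImpliesBound}, I would start from the hypothesis $\key{P_*}{\nLattice} = \key{\widetilde{Q}}{\nLattice}$ and build $\match$ in three groups. First, as in \lemref{boundedcollision}, each pair of off-diagonal points in $P_*$ and $\widetilde{Q}$ that share a lattice point yields a matched pair $(p,q)$ with $d_\infty(p,q) \le \tfrac{3}{2}\delta$. Second, each $p \in P$ that was deleted to form $P_*$ is within $\delta$ of $\diag$ by the definition of $\delsnap$, so I match it to $\proj(p)$ with edge cost at most $\delta$. Third, each $q \in Q$ removed to form $\widetilde{Q}$ lies within $\tfrac{1}{2}\delta$ of $\diag$ and is matched to $\proj(q)$ with cost at most $\tfrac{1}{2}\delta$. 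Every edge has length at most $\tfrac{3}{2}\delta$, giving $d_B(P,Q) \le \tfrac{3}{2}\delta$.

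For part \ref{thmPart:boundImpliesSnap}, I would fix an optimal $\match$ realizing $d_B(P,Q) \le \tfrac{1}{2}\delta$ and construct $P_*$ as follows. Each off-diagonal $p \in P$ is, under $\match$, matched either to an off-diagonal $q \in Q$ or to $\proj(p)$. If $p$ is matched to some $q \in \widetilde{Q}$, let $\ell_q$ be the lattice point to which $q$ snaps; the triangle inequality gives $d_\infty(p,\ell_q) \le d_\infty(p,q) + d_\infty(q,\ell_q) < \delta$, so $\ell_q$ is one of the four lattice corners of the grid cell containing $p$ and is therefore a legal snap target for $p$. Otherwise, $p$ is matched either to $\proj(p)$ (so $p$ is within $\tfrac{1}{2}\delta$ of $\diag$) or to some $q \in Q \setminus \widetilde{Q}$ (so $p$ is within $\delta$ of $\diag$ by the triangle inequality); in either case, $p$ may be legally deleted under $\delsnap$. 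Assembling these choices yields a $P_* \in \delsnap{P}{\nLattice}$ whose surviving points snap onto exactly the lattice images used by $\widetilde{Q}$.

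The main obstacle, and the reason this theorem is genuinely stronger than \lemref{boundedcollision}, is the final bijection claim in part \ref{thmPart:boundImpliesSnap}: I must verify that every $q \in \widetilde{Q}$ is matched under $\match$ to an off-diagonal point of $P$ (not to $\proj(q)$), so that each lattice-point contribution on the $\widetilde{Q}$ side is realized on the $P_*$ side with matching multiplicity. This is precisely where the threshold $\tfrac{1}{2}\delta$ in the definition of $\widetilde{Q}$ is essential: any $q \in \widetilde{Q}$ is strictly more than $\tfrac{1}{2}\delta$ from $\diag$, so matching $q$ to $\proj(q)$ would incur edge cost exceeding $\tfrac{1}{2}\delta \ge d_B(P,Q)$, contradicting the optimality of $\match$. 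Once this observation is in hand, the multiset of snapped lattice points on each side agrees exactly, and $\key{P_*}{\nLattice} = \key{\widetilde{Q}}{\nLattice}$ follows.
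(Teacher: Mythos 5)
Your proposal is correct and follows essentially the same argument as the paper: Part~1 extends the matching from \lemref{boundedcollision} by pairing deleted $P$-points and removed $Q$-points with their diagonal projections (at cost $\le\delta$ and $\le\tfrac{1}{2}\delta$ respectively), and Part~2 builds $P_*$ by snapping each $p$ matched to a surviving $q\in\widetilde{Q}$ to $\ell_q$ and deleting the rest, using the triangle inequality exactly as the paper does. Your explicit verification that no $q\in\widetilde{Q}$ can be matched to the diagonal (since $d_\infty(q,\diag)>\tfrac{1}{2}\delta\ge d_B(P,Q)$) is the same observation the paper makes when it notes $p\notin\diag$, just spelled out in slightly more detail.
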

%%%%%%%%%%%%%%%%%%%%%%%%%%%%%%%%%%%%%%%%%%%%%%%% COLLISION THEOREM

This result implies that diagrams with a small bottleneck distance relative
to the chosen~$\delta$ value
will have matching keys generated by the hashing function
while diagrams with a large bottleneck distance, relative to~$\delta$, will not.
Next, using \thmref{collision}, we discuss a multi-level data structure that,
for some query diagram $Q$, supports searching for approximate nearest neighbors in
$\dgms$.

\section{Finding Approximate Nearest Neighbors}\label{sec:approx}

In \thmref{collision}, we saw that for a query diagram $Q$ and grid
parameter~$\delta$,
$Q$ will share a key with some diagram~$P$ `if and only if' they are close, with
respect to the chosen scale $\delta$.  To find the near-neighbor, we must select a $\delta$ with
the correct relationship to $d_B(P, Q)$.  The relationship  presents two
problems. First, how do we determine the correct value for $\delta$? Second, a
single~$\delta$ value would rarely be sufficient for all queries.

In this section, we build a multi-level data structure to support approximate
nearest neighbor queries in the space of persistence diagrams.
Each level of the data structure corresponds to a grid with a different resolution.
In the previous
section, we needed a flexible notion for $\snapName$ and $\delsnapName$, but in
this section, the data structure level and grid are dependent.
So, we simplify notation.
Recall that, as our persistence diagrams are all $(M,m)$-bounded (i.e., in $\dgms$),
all points lie in $[-M,M]^2$.
For $i \in \Z_{\ge 0}$, we~define
\begin{itemize}
    \item $\latticeI{i} := \lattice{M}{2^{i+1}}$
    \item $\delta_i := 2M/2^i$, that is, the grid parameter for $\latticeI{i}$
    \item $\keyI{i}{P} := \key{P}{\latticeI{i}}$
    \item $\snapI{i}{P} := \snap{P}{\latticeI{i}}$
    \item $\delsnapI{i}{P} := \delsnap{P}{\latticeI{i}}$
\end{itemize}

%%%%%%%%%%%%%%%%%%%%%%%%%%%%%%%%%%%%%%%%%%%%%%%%%%%%%%%%%%%%%%%
\begin{definition}[Data Structure]\label{def:bigDS}
    Let $\Gamma \subset \dgms$ be finite,
    let~$c>\frac{3}{2}$,
    and let $\epsilon$ be the minimum
    bottleneck distance between any two diagrams in~$\Gamma$.
    Let $\tau =\bigdsUP$, then for
    each integer $i \in \{ 0, \ldots, \tau \}$,
    we define~$\Delta_i=\Delta_i(\Gamma)$ to be the data
    structure that stores the sorted list of keys $\{\keyI{i}{P_t}\}_{i,t,P}$, for
    each $P_t \in \delsnapI{i}{P}$ and $P \in \Gamma$.
    With each key, we store
    a list of distinct
    persistence diagrams from $\Gamma$ which have a snap-rounding to that
    key, and the number of distinct diagrams from $\Gamma$ which snap to the key.
    We note that a diagram with a given key can be found in time
    logarithmic in the number of distinct keys at that level.
    We denote the array of the multi-level data structure
    as~$\bigds=\bigds(\Gamma) := \{\Delta_i\}_{i=0}^{\tau}$. We can access a given
    level in constant time.
\end{definition}

In the definition above, the choice of $c$ and $\epsilon$ provides a point at which
the diagrams with the smallest bottleneck distance
stop colliding and we can stop considering smaller values of~$\delta$.
In particular, we choose $c>\frac{3}{2}$,
because the contrapositive of~\thmpartref{collision}{snapImpliesBound}
implies that if $d_B(P, Q) > \frac{3}{2}\delta_i$,
then~$\keyI{i}{P_*} \neq \keyI{i}{\widetilde{Q}}$ for any $P_* \in \delsnapI{i}{P}$.
Thus, we can guarantee that~$Q$ will share
a key with a representative of~$P$, for each~$Q$ close enough to~$P$.

\begin{remark}
    For each level,
    each diagram has $O(5^m)$ snap-roundings and keys
    that can each be generated in $O(m)$ time.
    Comparing two keys to determine their relative order requires~$O(m)$~time.

    For each diagram at each level,
    we can determine the set of unique keys in $O(m 5^m)$ by sorting
    the $O(5^m)$ keys and removing duplicates.
    Finding the unique keys for $n$ diagrams takes $O(nm5^m)$.
    Sorting the keys at a given level for $n$ diagrams takes
    $O(m(n5^m\log {n5^m})) = O(m(n5^m\log {n} + n5^m\log{5^m})) = O(m(n5^m\log {n} + n5^mm))
    = O(mn5^m(\log {n} + m))$ time.
    Creating a list of diagrams for each unique key at a given level requires $O(n5^m)$ time but
    this operation is asymptotically smaller than the complexity of sorting the
    keys.
    Then, generating the data structure $\bigds$ with $\tau$ levels
    takes $O(\tau (mn5^m(\log{n} + m))$ time.
\end{remark}

%%%%%%%%%%%%%%%%%%%%%%%%%%%%%%%%%%%%%%%%%%%%%%%%%%%%%%%%%%%%%%%
Next, we consider some properties of $\bigds$, specifically, that collisions on
a level of the data structure with a fine resolution imply collisions between
the same diagrams on levels with coarser resolutions.
To simplify notation, for $Q\in \dgms$ and $i \in \Z_{\geq 0}$, we let
$\Q_i$ be the diagram obtained from $Q$ by removing
all points less than or equal to distance~$\frac{1}{2}\delta_i$ of the
diagonal.
We again mention that all proofs can be found in the appendix.

\begin{lemma}[Hierarchical Collision]\label{lem:weakCollision}
    Let $\Gamma \subset \dgms$ be finite.
    Let $Q \in \dgms$ and $P \in \Gamma$.
    Let $j \in \Z_{\geq 0}$ and let
    $\Q_j$ and $\Q_i$ be the diagrams obtained from~$Q$ by removing
    all points less than or equal to distance~$\frac{1}{2}\delta_j$ (resp.,
    $\frac{1}{2}\delta_i$) of the
    diagonal.
    Suppose there exists $P_j \in \delsnapI{j}{P}$ such that~$\keyI{j}{P_j}=\keyI{j}{\Q_j}$
    (i.e.,~$P$ and $Q$ collide in level $\Delta_j$),
    then for any $i < j$, there exists $P_i \in \delsnapI{i}{P}$
    such that~$\keyI{i}{P_i}=\keyI{i}{\Q_i}$.
\end{lemma}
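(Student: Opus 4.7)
\emph{Plan.} My strategy is to carry the level-$j$ collision down to level $i$ by constructing $P_i\in\delsnapI{i}{P}$ directly from $P_j$. Since $\keyI{j}{P_j}=\keyI{j}{\Q_j}$ as multisets of $\latticeI{j}$-lattice points, there is a bijection $M$ pairing each $p\in P_j$ with the $q\in\Q_j$ sharing $p$'s assigned lattice point $\ell_j\in\latticeI{j}$; each such pair satisfies $d_\infty(p,q)\le\tfrac{3}{2}\delta_j$ by the triangle inequality, since $d_\infty(p,\ell_j)\le\delta_j$ and $d_\infty(q,\ell_j)\le\tfrac{1}{2}\delta_j$. I then define $P_i$ by case analysis on each $p\in P$: if $p\in P\setminus P_j$ (already deleted at level $j$), keep $p$ deleted; if $(p,q)\in M$ with $q\in\Q_i\subseteq\Q_j$, snap $p$ to the nearest lattice point $\ell_q\in\latticeI{i}$ to $q$; and if $(p,q)\in M$ with $q\in\Q_j\setminus\Q_i$, delete $p$. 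Once structural validity $P_i\in\delsnapI{i}{P}$ is established, the key equality $\keyI{i}{P_i}=\keyI{i}{\Q_i}$ is immediate: each $q\in\Q_i$ contributes the basis vector $e_{\ell_q}$ to $\keyI{i}{\Q_i}$, its matched $p$ contributes the same vector to $\keyI{i}{P_i}$, and all remaining (deleted) points contribute nothing.

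\emph{Main obstacle.} The technical heart of the proof is verifying that $P_i\in\delsnapI{i}{P}$ in the consecutive case $j=i+1$. The direct triangle inequality only yields $d_\infty(p,\ell_q)\le\tfrac{3}{2}\delta_j+\tfrac{1}{2}\delta_i=\tfrac{5}{4}\delta_i>\delta_i$, and similarly $d(p,\diag)\le\tfrac{5}{4}\delta_i$ in the deletion subcase, both of which are insufficient. To tighten these estimates I would exploit two structural features of the paper's lattices. First, a direct calculation from $\latticeI{i}=\lattice{M}{2^{i+1}}$ and $\latticeI{j}=\lattice{M}{2^{j+1}}$ shows that every lattice point of $\latticeI{j}$ lies at offset exactly $(\pm\delta_i/4,\pm\delta_i/4)$ from its nearest $\latticeI{i}$-point; together with the standing no-Voronoi-edge assumption, this forces $\ell_q$ to be precisely the unique $\latticeI{i}$-point closest to $\ell_j$, and this point is always a corner of $p$'s cell in $\latticeI{i}$ because $d_\infty(p,\ell_j)\le\delta_j=\delta_i/2$. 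Second, $L:=\ell_{j,y}-\ell_{j,x}$ is always an integer multiple of $\delta_j$ because $\latticeI{j}$ is axis-aligned with spacing $\delta_j$; combining $d(q,\diag)\le\tfrac{1}{2}\delta_i=\delta_j$ with no-Voronoi-edge strictly rules out $|L|=3\delta_j$, so $|L|\le 2\delta_j$ and hence $d(p,\diag)\le|L|/2+\delta_j\le 2\delta_j=\delta_i$ as required.

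\emph{General case.} For $j-i\ge 2$ the argument reduces to the triangle inequality alone: $\delta_j\le\delta_i/4$ gives $\tfrac{3}{2}\delta_j+\tfrac{1}{2}\delta_i\le\tfrac{7}{8}\delta_i<\delta_i$ for both the snap and deletion bounds in the construction above. Equivalently, \thmpartref{collision}{snapImpliesBound} gives $d_B(P,Q)\le\tfrac{3}{2}\delta_j\le\tfrac{1}{2}\delta_i$, so \thmpartref{collision}{boundImpliesSnap} applied at level $i$ yields the collision at level $i$ directly, bypassing the lattice-alignment analysis entirely.
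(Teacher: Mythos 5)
Your proof is correct and arrives at the same lemma, but it makes the underlying geometry more explicit than the paper and simplifies the non-critical cases. The paper reduces to the single step $j\to j-1$, pulls a perfect matching of bottleneck cost at most $\frac{3}{2}\delta_j$ from \thmpartref{collision}{snapImpliesBound}, and re-routes it case by case at level $j-1$; in its Subcase~3b it asserts $d_\infty(\ell_q,\ell'_q)\le\frac{1}{2}\delta_{j-1}$, justified only by ``since $q$ is more than $\frac{1}{2}\delta_{j-1}$ from $D$,'' which does not by itself imply the bound (the naive triangle inequality gives only $\frac{3}{4}\delta_{j-1}$). You instead (i) extract the pairing directly from equality of the sparse key vectors at level $j$, staying inside $\delsnapI{j}{P}$ rather than detouring through the theorem, and (ii) dispatch $j-i\ge 2$ immediately by composing \thmpartref{collision}{snapImpliesBound} with \thmpartref{collision}{boundImpliesSnap}, isolating $j=i+1$ as the only case that needs the lattice geometry. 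For that case you surface the fact the paper leaves implicit: $\latticeI{j}$ interleaves $\latticeI{i}$ at a per-axis offset of a quarter cell, so the nearest $\latticeI{i}$-point to $q$ coincides with the nearest $\latticeI{i}$-point to the shared level-$j$ lattice point $\ell_j$, and that point is a corner of $p$'s $\latticeI{i}$-cell since $d_\infty(p,\ell_j)\le\delta_j=\frac{1}{2}\delta_i$. This yields the exact offset $\frac{1}{4}\delta_{j-1}$ and explains the paper's asserted $\frac{1}{2}\delta_{j-1}$. Your deletion subcase --- $\ell_{j,y}-\ell_{j,x}$ is a multiple of $\delta_j$, so $d(\ell_j,D)$ is quantized and the no-Voronoi-edge assumption rules out the extremal value --- parallels the paper's Subcase~3a, which uses the same quantization. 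Net effect: same case structure, but your version is more self-contained (no appeal to \thmref{collision} in the hard case), handles non-consecutive levels more cleanly, and makes the lattice-alignment argument precise where the paper hand-waves.
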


To find a near neighbor to $Q$ in $\Gamma$, we determine the last level
such that~$Q$ collides with an existing key in the data structure.
However, first we must consider where
the nearest neighbor lies relative to this level.

\begin{lemma}[Nearest Neighbor Bin]\label{lem:nnBin}
    Let $\Gamma, Q,$ and $\Q_i$ be as defined in \lemref{weakCollision}.
    Let $\Q_{i-2}$ be obtained from $Q$ by removing all points within
    $\frac{1}{2}\delta_{i-2}$ of the diagonal.
    Let~$P^{nn} \in \Gamma$ be the nearest neighbor of $Q$ in $\Gamma$, with respect to the
    bottleneck distance between diagrams.
    Let $i$ be the largest index such that $\keyI{i}{\Q_i}$ has a collision in
    $\Delta _i$.
    Then, there exists a snap-rounding~$P^{nn}_{i-2}$ in~$\delsnapI{i-2}{P^{nn}}$
    such that $\keyI{i-2}{P^{nn} _{i-2}} = \keyI{i-2}{\Q _{i-2}}$.
\end{lemma}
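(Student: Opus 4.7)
The plan is to use the two parts of Theorem~\ref{thm:collision} as a bridge from ``$Q$ has some collision at level $i$'' (an upper bound on a bottleneck distance) to ``$Q$ collides with $P^{nn}$ at level $i-2$'' (an existence statement). The gap between levels $i$ and $i-2$ is exactly what is needed to make the arithmetic work out.

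First I would unpack what the hypothesis gives us. Since $\keyI{i}{\Q_i}$ has a collision in $\Delta_i$, there is some $P' \in \Gamma$ and some $P'_* \in \delsnapI{i}{P'}$ with $\keyI{i}{P'_*} = \keyI{i}{\Q_i}$. Applying \thmpartref{collision}{snapImpliesBound} at scale $\delta_i$, we conclude $d_B(P',Q) \le \tfrac{3}{2}\delta_i$. Because $P^{nn}$ is the \emph{nearest} neighbor of $Q$ in $\Gamma$, this yields the key upper bound
\[
d_B(P^{nn},Q) \;\le\; d_B(P',Q) \;\le\; \tfrac{3}{2}\delta_i.
\]

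Next I would compare this to the threshold at level $i-2$. By definition $\delta_{i-2} = 2M/2^{i-2} = 4\delta_i$, so $\tfrac{1}{2}\delta_{i-2} = 2\delta_i \ge \tfrac{3}{2}\delta_i \ge d_B(P^{nn},Q)$. (Note that one level of coarsening is not enough: at level $i-1$ the threshold $\tfrac{1}{2}\delta_{i-1} = \delta_i$ is smaller than our upper bound $\tfrac{3}{2}\delta_i$, which is precisely why the statement drops two levels.) Now \thmpartref{collision}{boundImpliesSnap}, applied to $P^{nn}$ and $Q$ at scale $\delta_{i-2}$, produces a snap-rounding $P^{nn}_{i-2} \in \delsnapI{i-2}{P^{nn}}$ such that $\keyI{i-2}{P^{nn}_{i-2}} = \keyI{i-2}{\Q_{i-2}}$, which is the desired collision.

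The proof is essentially a short arithmetic argument sandwiched between the two halves of Theorem~\ref{thm:collision}; there is no real obstacle once one sees that the factor of $4$ between $\delta_{i-2}$ and $\delta_i$ absorbs the factor-$3$ gap between the two directions of the collision theorem. The only subtlety worth a sentence in the writeup is emphasizing why $i-2$ rather than $i-1$ suffices, as this explains the choice made in the statement and motivates the constants that will propagate into the six-approximation guarantee advertised in the introduction.
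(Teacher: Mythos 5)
Your proof is correct and uses exactly the same ingredients as the paper: Theorem~\ref{thm:collision} part~1 applied to the colliding diagram at level $i$, the definition of the nearest neighbor, and the scaling $\tfrac{1}{2}\delta_{i-2} = 2\delta_i$ fed into Theorem~\ref{thm:collision} part~2. The only difference is presentational: the paper phrases the argument as a proof by contradiction, whereas you argue directly, but the underlying logic is identical.
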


A result of \lemref{nnBin} is that if $i$ is the largest index such
that $Q$ has a collision
in $\Delta_i$, then we can construct examples in which $Q$ does not collide
with the nearest neighbor in $\Delta_i$, see
diagrams in \figref{nn-not-collide}.
\begin{figure}
\begin{center}
\includegraphics[width=.8\textwidth]{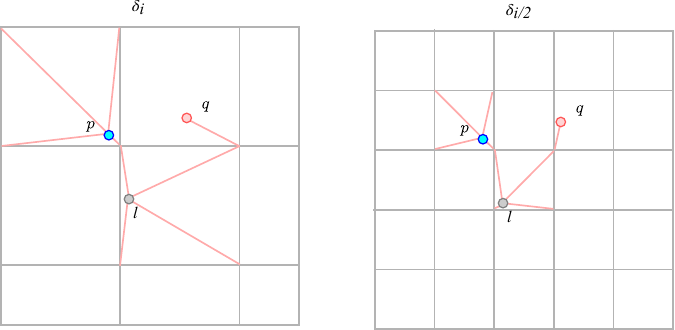}
\end{center}
\caption[Example for which nearest neighbor is not in last bin with collision]
{Situation in which the nearest neighbor of diagram $Q$
is not in the lowest bin that $Q$ has a collision in. Let query diagram
$Q$ be composed of the single point $q$ and let $P,L \in \dgms$, where
$P$ has one point $p$ and $L$ has one point $l$. We see that $d_B(P,Q)
= \frac{\delta_i}{2} + \epsilon_1$ for some small constant $\epsilon_1$
and $d_B(L,Q) = \frac{3\delta_i}{4}-\epsilon_2$ for some small
constant $\epsilon_2$. However, only $L$ and $Q$ collide at level
$\frac{\delta_i}{2}$, even though $d_B(P,Q) < d_B(L,Q)$.}
\label{fig:nn-not-collide}
\end{figure}

Next, we show that any diagram colliding with $Q$ in $\Delta_i$
is an approximate nearest neighbor.

\begin{lemma}[Nearest Neighbor Approximation]\label{lem:nn}
    Let $\Gamma, Q,$ and $\Q_i$ be as defined in \lemref{weakCollision}.
    Let $i$ be the largest index such that $\keyI{i}{\Q_i} \in \Delta_i$.
    Let~$P^{nn}\in \Gamma$ be the nearest neighbor of $Q$ in terms of bottleneck
    distance.
    The bottleneck distance between $Q$ and every diagram of $\Gamma$ with a key
    $\keyI{i}{\Q_i}$ in $\Delta_i$
    is a six-approximation of $d_B(P^{nn},Q)$.
\end{lemma}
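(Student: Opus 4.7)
The plan is to sandwich $d_B(P, Q)$ and $d_B(P^{nn}, Q)$ using the two directions of \thmref{collision} applied at levels $i$ and $i+1$, respectively. Let $P \in \Gamma$ be any diagram that collides with $Q$ at level $\Delta_i$, meaning there exists $P_* \in \delsnapI{i}{P}$ with $\keyI{i}{P_*} = \keyI{i}{\Q_i}$.

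First I would use the collision at level $i$ to obtain an upper bound on $d_B(P, Q)$. By \thmpartref{collision}{snapImpliesBound} applied with $\delta = \delta_i$, the collision of $P_*$ and $\Q_i$ immediately yields
\[ d_B(P, Q) \le \tfrac{3}{2}\delta_i. \]

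Next I would use the maximality of $i$ to obtain a lower bound on $d_B(P^{nn}, Q)$. Since $i$ is the largest index for which $Q$ produces a key appearing in the data structure, no diagram in $\Gamma$ collides with $Q$ at level $\Delta_{i+1}$; in particular, there is no $P^{nn}_* \in \delsnapI{i+1}{P^{nn}}$ with $\keyI{i+1}{P^{nn}_*} = \keyI{i+1}{\Q_{i+1}}$. The contrapositive of \thmpartref{collision}{boundImpliesSnap} applied at level $i+1$ then forces
\[ d_B(P^{nn}, Q) > \tfrac{1}{2}\delta_{i+1} = \tfrac{1}{4}\delta_i, \]
since $\delta_{i+1} = \delta_i/2$ by the definition of the hierarchy in \secref{approx}.

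Finally I would chain the two bounds to conclude
\[ d_B(P, Q) \le \tfrac{3}{2}\delta_i < \tfrac{3}{2}\cdot 4\, d_B(P^{nn}, Q) = 6\, d_B(P^{nn}, Q), \]
which is the claimed six-approximation. There is no real obstacle here beyond careful bookkeeping: the only subtle point is making sure that the contrapositive is invoked at level $i+1$ (and applies to $P^{nn}$ because no diagram of $\Gamma$ collides there), rather than at level $i$ itself, since at level $i$ the nearest neighbor need not collide (as anticipated by \lemref{nnBin}, which only guarantees a collision of $P^{nn}$ at level $i-2$).
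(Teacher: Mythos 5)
Your proof is correct and uses essentially the same sandwich argument as the paper's: \thmpartref{collision}{snapImpliesBound} applied at level $i$ gives $d_B(P,Q)\le\frac{3}{2}\delta_i$, and the contrapositive of \thmpartref{collision}{boundImpliesSnap} applied at level $i+1$ gives $d_B(P^{nn},Q)>\frac{1}{2}\delta_{i+1}=\frac{1}{4}\delta_i$, from which the factor of six follows. The only difference is presentational: the paper first cites \lemref{nnBin} and enumerates the candidate levels $\Delta_{i-2},\Delta_{i-1},\Delta_i$ for $P^{nn}$'s last collision before taking the worst case, whereas you observe directly that the relevant (worst-case) lower bound already follows from the maximality of $i$, i.e., the absence of any collision at $\Delta_{i+1}$, making the case analysis superfluous.
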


The previous discussion implies that we can find an approximate
nearest neighbor by identifying the bin in the lowest level with a collision
and picking any diagram in that bin. Moreover, it
tells us that we can find the true nearest neighbor by
linearly searching for it through all diagrams with a collision two levels up.
Next, we prove that we can query for an approximate $k$th nearest neighbor for $k>1$.
First, we establish bounds on the location of $k$th nearest neighbor, generalizing
results from \lemref{nnBin}.

\begin{lemma}[$k$th-NN Location Upper Bound]
Let $\Gamma, Q,$ and $\Q_i$ be as defined in \lemref{weakCollision}.
Let~$P^k \in \Gamma$ be the $k$th nearest neighbor of $Q$ in $\Gamma$, with respect to the
bottleneck distance between diagrams.
Let $i$ be the largest index such that the number of distinct diagrams with snap-roundings
and keys equal to~$\keyI{i}{\Q_i}$ in $\Delta _i$ is at least~$k$.
Then, there exists a snap-rounding $P^k _{i-2}\in \delsnapI{i-2}{P^k}$ such that $\keyI{i-2}{P^k _{i-2}} = \keyI{i-2}{\Q _{i-2}}$.
\label{lem:knnLoc}
\end{lemma}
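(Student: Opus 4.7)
The proof will closely mirror that of \lemref{nnBin}, proceeding by contradiction and leveraging the hierarchical collision structure to derive a counting contradiction on the rank of $P^k$. The plan is to assume that $P^k$ has no snap-rounding in $\delsnapI{i-2}{P^k}$ colliding with $\keyI{i-2}{\Q_{i-2}}$, then exhibit $k$ diagrams strictly closer to $Q$ than $P^k$, contradicting the definition of the $k$th nearest neighbor.

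First, I would invoke the contrapositive of \thmpartref{collision}{boundImpliesSnap} applied at level $i-2$: since no snap-rounding of $P^k$ collides with $\keyI{i-2}{\Q_{i-2}}$, we must have $d_B(P^k,Q) > \tfrac{1}{2}\delta_{i-2} = 2\delta_i$. Second, I would use \lemref{weakCollision} (in its contrapositive form): if $P^k$ does not collide with $Q$ at level $i-2$, then $P^k$ cannot collide with $Q$ at any finer level $j \ge i-2$, and in particular $P^k$ does not appear among the diagrams colliding with $\keyI{i}{\Q_i}$ in $\Delta_i$.

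Third, I would exploit the hypothesis on $i$: there are at least $k$ \emph{distinct} diagrams $P^{(1)},\dots,P^{(k)} \in \Gamma$ each having some snap-rounding in $\delsnapI{i}{P^{(\ell)}}$ whose key equals $\keyI{i}{\Q_i}$. Applying \thmpartref{collision}{snapImpliesBound} to each of these diagrams yields $d_B(P^{(\ell)}, Q) \le \tfrac{3}{2}\delta_i$ for every $\ell \in \{1,\dots,k\}$. Combining the bounds gives
\[
d_B(P^{(\ell)}, Q) \le \tfrac{3}{2}\delta_i < 2\delta_i < d_B(P^k, Q),
\]
so each of these $k$ diagrams is strictly closer to $Q$ than $P^k$. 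By the previous step, none of them coincides with $P^k$, so we have produced at least $k$ diagrams of $\Gamma \setminus \{P^k\}$ strictly nearer to $Q$ than $P^k$, contradicting the fact that $P^k$ is the $k$th nearest neighbor.

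The main obstacle is to ensure that the $k$ colliding diagrams are genuinely distinct from $P^k$ itself; this is exactly where \lemref{weakCollision} is essential, since without it one could worry that $P^k$ is counted among the $k$ diagrams at level $i$. The rest reduces to chaining the two directions of \thmref{collision} with the correct numerical slack between $\delta_i$ and $\delta_{i-2} = 4\delta_i$.
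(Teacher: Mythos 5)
Your proof follows the paper's argument exactly: contrapositive of \thmpartref{collision}{boundImpliesSnap} at level $i-2$ gives $d_B(P^k,Q) > 2\delta_i$, \thmpartref{collision}{snapImpliesBound} at level $i$ gives $k$ diagrams within $\tfrac{3}{2}\delta_i < 2\delta_i$ of $Q$, and the two bounds contradict the rank of $P^k$. One small correction: your appeal to \lemref{weakCollision} to rule out $P^k$ being among the $k$ colliding diagrams is redundant, not essential --- the chain $d_B(P^{(\ell)},Q) \le \tfrac{3}{2}\delta_i < 2\delta_i < d_B(P^k,Q)$ already forces $P^{(\ell)} \neq P^k$ for every $\ell$, since a single diagram cannot have two different distances to $Q$; the paper relies solely on this numerical separation and does not invoke \lemref{weakCollision}.
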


Then, we also bound the number of levels with a finer grid resolution that
the $k$th-nearest neighbor can collide with a snap-rounding of the query diagram.

\begin{lemma}[$k$th-NN Location Lower Bound]
Let $\Gamma, Q,$ and $\Q_i$ be as defined in \lemref{weakCollision}.
Let $P^k \in \Gamma$ be the $k$th nearest neighbor of $Q$ in $\Gamma$, with respect to the
bottleneck distance between diagrams.
Let $i$ be the largest index such that the number of distinct diagrams with snap-roundings
and keys equal to~$\keyI{i}{\Q_i}$ in $\Delta _i$ is at least~$k$.
Then, $P^k$ does not have a snap-rounding and key colliding with $\Q$ in
    any $\Delta _j$ such that $j > i+2$.
\label{lem:kNNDepth}
\end{lemma}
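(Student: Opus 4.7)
The plan is to argue by contradiction, combining the maximality of $i$ with both directions of \thmref{collision}. Suppose $P^k$ has a snap-rounding $P^k_j \in \delsnapI{j}{P^k}$ with $\keyI{j}{P^k_j} = \keyI{j}{\Q_j}$ for some $j \geq i+3$. The goal is to exhibit at least $k$ distinct diagrams of $\Gamma$ that all collide with $Q$ at level $i+1$, directly contradicting the definition of $i$ as the largest index at which at least $k$ diagrams collide with $\Q_i$.

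The first step is to bound $d_B(P^k, Q)$ from above using the collision at level $j$. Applying \thmpartref{collision}{snapImpliesBound} gives $d_B(P^k, Q) \le \frac{3}{2}\delta_j$, and since $j \ge i+3$ we have $\delta_j \le \delta_{i+3} = \frac{1}{8}\delta_i$. A one-line calculation shows that $\frac{3}{2}\delta_{i+3} = \frac{3}{16}\delta_i < \frac{1}{4}\delta_i = \frac{1}{2}\delta_{i+1}$, so in particular $d_B(P^k, Q) < \frac{1}{2}\delta_{i+1}$.

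The second step is to pull in the $k-1$ diagrams of $\Gamma$ that are at least as close to $Q$ as $P^k$. Because $P^k$ is the $k$th nearest neighbor, there exist diagrams $P^1, \ldots, P^{k-1} \in \Gamma \setminus \{P^k\}$ with $d_B(P^r, Q) \le d_B(P^k, Q) < \frac{1}{2}\delta_{i+1}$ for every $r \in \{1,\ldots,k\}$. Applying \thmpartref{collision}{boundImpliesSnap} at level $i+1$ to each of these $k$ diagrams produces, for each $r$, a snap-rounding $P^r_{i+1} \in \delsnapI{i+1}{P^r}$ whose key equals $\keyI{i+1}{\Q_{i+1}}$. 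Since $\Gamma$ is a set, the $P^r$ are pairwise distinct diagrams, so at least $k$ distinct diagrams of $\Gamma$ collide with $\Q_{i+1}$ in $\Delta_{i+1}$, contradicting the maximality of $i$.

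I do not anticipate a serious obstacle; the argument is essentially the symmetric counterpart of \lemref{knnLoc}, trading \thmref{collision}\ref{thmPart:snapImpliesBound} and \ref{thmPart:boundImpliesSnap} against one another. The only minor subtlety is handling possible ties at rank $k$, which is resolved by picking any $k-1$ diagrams in $\Gamma \setminus \{P^k\}$ whose bottleneck distance to $Q$ does not exceed $d_B(P^k, Q)$; the remaining steps are the short calculation $\frac{3}{2}\delta_{i+3} < \frac{1}{2}\delta_{i+1}$ and two invocations of \thmref{collision}.
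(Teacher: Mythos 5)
Your argument is correct and follows the same route as the paper's: assume for contradiction a collision at some level $j \ge i+3$, use \thmpartref{collision}{snapImpliesBound} to bound $d_B(P^k,Q) < \tfrac{1}{2}\delta_{i+1}$, then use \thmpartref{collision}{boundImpliesSnap} to force at least $k$ collisions at level $i+1$, contradicting the maximality of $i$. The only cosmetic difference is that the paper first invokes \lemref{weakCollision} to reduce to the single case $j = i+3$ before doing the arithmetic, whereas you bound $\delta_j \le \delta_{i+3}$ directly; both are fine.
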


Using the previous two lemmas, we bound levels for which $P^k$ can
collide with~$\Q$.

\begin{corollary}[$k$th-NN Location]
Let $\Gamma, Q,$ and $\Q_i$ be as defined in \lemref{weakCollision}.
Let $P^k \in \Gamma$ be the $k$th nearest neighbor of $Q$, with respect to the
bottleneck distance between diagrams.
Let $i$ be the largest index such that the number of distinct diagrams with snap-roundings
and keys equal to~$\keyI{i}{\Q_i}$ at $\Delta _i$ is at least~$k$.
Let $j$ be the largest level in $\bigds$ such that there is a snap rounding and
key of $P^k$ colliding with $Q$.
Then, $i-2 \le j \le i+2$, i.e., $P^k$ must have a snap-rounding in, at most,
$\Delta _{i+2}$ and in, at least, $\Delta _{i-2}$.
\label{cor:kNNLevels}
\end{corollary}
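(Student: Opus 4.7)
The plan is to obtain this corollary as an immediate consequence of the two preceding lemmas, since they already bracket $j$ on both sides under the hypotheses of the corollary. No new geometric or combinatorial argument is required; the task is simply to identify which lemma supplies which inequality and verify that their hypotheses match those of the corollary.

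First, I would observe that the hypotheses of the corollary (namely, $\Gamma$, $Q$, $\Q_i$, $P^k$, and the index $i$ chosen as the largest level in which $\keyI{i}{\Q_i}$ is shared by at least $k$ distinct diagrams) are precisely the hypotheses of both \lemref{knnLoc} and \lemref{kNNDepth}. So I can invoke each without restating anything. Next, I would apply \lemref{knnLoc} to conclude that $P^k$ has a snap-rounding in $\delsnapI{i-2}{P^k}$ whose key collides with $\keyI{i-2}{\Q_{i-2}}$; since $j$ is defined as the \emph{largest} level at which $P^k$ collides with $Q$, this immediately yields $j \geq i-2$. Then I would apply \lemref{kNNDepth} to conclude that $P^k$ has no colliding snap-rounding at any level strictly greater than $i+2$, which gives $j \leq i+2$.

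Combining the two bounds gives $i-2 \leq j \leq i+2$, which is exactly the claim. I would close by noting the two implied structural consequences stated in the corollary: that $P^k$ is guaranteed to collide with $\Q$ somewhere in the window $\{\Delta_{i-2}, \ldots, \Delta_{i+2}\}$ of five consecutive levels, and that $\Delta_{i+2}$ is the deepest level where such a collision can occur.

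Since the corollary is a genuine corollary rather than a theorem in its own right, I do not anticipate a main obstacle. The only minor pitfall is bookkeeping about the meaning of \emph{largest}: the index $j$ is the deepest (finest-resolution) level where a collision occurs, so the upper-bound lemma controls $j$ from above and the lower-bound lemma controls the existence of a collision at a sufficiently coarse level, which controls $j$ from below. Being explicit about this direction of the inequalities when invoking the two lemmas is the only place a reader could be momentarily confused, and I would state it cleanly at the start of the proof.
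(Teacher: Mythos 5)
Your proof is correct and takes exactly the route the paper intends: the paper offers no separate argument, introducing \corref{kNNLevels} with ``As a result of the previous two lemmas,'' and your middle paragraph correctly deduces $j \geq i-2$ from \lemref{knnLoc} and $j \leq i+2$ from \lemref{kNNDepth}. One small slip in your closing bookkeeping remark: it is the lemma the paper titles \emph{Upper Bound} (\lemref{knnLoc}) that supplies the \emph{lower} bound $j \geq i-2$ by exhibiting a collision at the coarse level $\Delta_{i-2}$, while the one titled \emph{Lower Bound} (\lemref{kNNDepth}) supplies the \emph{upper} bound $j \leq i+2$ by ruling out collisions at finer levels, so your last sentence reverses those associations even though the derivation itself is right.
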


To find the $k$th nearest neighbor to $Q$ in $\Gamma$,
we determine the last level of
$\bigds$ where~$Q$ has at least $k$ collisions. The proof is a
modification of the proof of \lemref{nn}.

\begin{lemma}[$k$th-Nearest Neighbor Approximation]
Let $\Gamma, Q,$ and $\Q_i$ be as defined in \lemref{weakCollision}.
Let $k$ be a positive integer greater than one.
Let $P^k \in \Gamma$ be the $k$th nearest neighbor of $Q$, with respect to the
bottleneck distance between diagrams.
Let $i$ be the largest index such that the number of distinct diagrams with snap-roundings
and keys equal to~$\keyI{i}{\Q_i}$ at $\Delta _i$ is at least~$k$.
The bottleneck distance between $Q$ and every diagram of $\Gamma$ with a key
$\keyI{i}{\Q_i}$ in $\Delta_i$
is a $24$-approximation of the $d_B(P^k,Q)$.
\label{lem:knnApprox}
\end{lemma}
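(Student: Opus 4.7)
The plan is to mirror the argument used for \lemref{nn} (the six-approximation for $k=1$), but replace \lemref{nnBin} with \corref{kNNLevels} to track the worst-case level at which $P^k$ can last collide. The two ingredients I need are an upper bound on $d_B(P,Q)$ for any $P$ colliding at level~$i$, and a lower bound on $d_B(P^k,Q)$ coming from the fact that $P^k$ stops colliding with $Q$ not too far above level~$i$. Once both bounds are expressed in terms of $\delta_i$, the ratio will produce the constant $24$.

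First, I would establish the upper bound. Fix any $P\in\Gamma$ whose snap-rounding at level $i$ gives $\keyI{i}{P_i}=\keyI{i}{\Q_i}$. By \thmpartref{collision}{snapImpliesBound}, $d_B(P,Q)\le\tfrac{3}{2}\delta_i$. Since by the choice of $i$ there are at least $k$ such diagrams, in particular $d_B(P^k,Q)\le\tfrac{3}{2}\delta_i$ as well. Next, for the lower bound on $d_B(P^k,Q)$, I invoke \corref{kNNLevels}: the largest level $j$ at which $P^k$ has a snap-rounding colliding with $Q$ satisfies $i-2\le j\le i+2$. The worst case is $j=i+2$. In that case $P^k$ does \emph{not} collide with $Q$ at level $j+1=i+3$, and the contrapositive of \thmpartref{collision}{boundImpliesSnap} gives $d_B(P^k,Q)>\tfrac{1}{2}\delta_{j+1}\ge\tfrac{1}{2}\delta_{i+3}$.

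Combining the two bounds with the identity $\delta_{i+3}=\delta_i/8$, for any $P$ whose key at level $i$ equals $\keyI{i}{\Q_i}$:
\[
d_B(P,Q)\;\le\;\tfrac{3}{2}\delta_i\;=\;24\cdot\tfrac{1}{2}\delta_{i+3}\;<\;24\,d_B(P^k,Q),
\]
which is precisely the claimed $24$-approximation.

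I do not expect any serious obstacle; the proof is essentially a level-shifted version of \lemref{nn}, and the only care needed is bookkeeping. The one subtlety worth double-checking is that we really use the \emph{largest} possible $j$ from \corref{kNNLevels}, since picking $j=i-2$ or $j=i-1$ would only make $d_B(P^k,Q)$ larger and give a tighter approximation; taking $j=i+2$ is what yields the worst-case factor and accounts for the jump from the constant $6$ in \lemref{nn} (where $P^{nn}$ was guaranteed to collide down to level $i-2$) to the constant $24$ here.
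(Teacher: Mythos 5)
Your proof is correct and follows essentially the same route as the paper's: you use \thmpartref{collision}{snapImpliesBound} to bound $d_B(P,Q)\le\frac{3}{2}\delta_i$, invoke \corref{kNNLevels} to locate the worst-case last collision of $P^k$ at $\Delta_{i+2}$, and apply the contrapositive of \thmpartref{collision}{boundImpliesSnap} at level $i+3$ to obtain $d_B(P^k,Q)>\frac{1}{16}\delta_i$, giving the factor $24$. The paper phrases the lower bound uniformly over $\alpha\in\{-2,\ldots,2\}$ before minimizing, but the content and constants are identical to yours.
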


\begin{remark}
The approximation factor is controlled by the last level
    where $P^k$ and $Q$ collide.
So, while we do not propose an efficient test for identifying the last level,
    we observe that in some cases,
    the approximation factor is much tighter.
For example,
    if $P^k$ last collides with~$Q$ in $\Delta_{i-2}$,
    then for all $P \in \Gamma$ that collide with $Q$ in $\Delta_i$,
    $d_B(P,Q) \le \frac{3}{2} d_B(P^k,Q)$.
\end{remark}

We now identify approximate nearest neighbors
for a query diagram.

\begin{theorem}[Approximate Nearest Neighbor Query]
Let $\Gamma, Q,$ and $\Q_i$ be as defined in \lemref{weakCollision}.
Let $n=|\Gamma|$ and let $\bigds$ be the multi-level structure described
in \defref{bigDS} with $\tau$ levels.
Then, the data structure $\bigds$ is of size  $\spcone$ and supports finding a
six-approximation of the nearest neighbor of $Q$ in $\Gamma$ in
    $\qtone$~time.
\label{thm:approxnn}
\end{theorem}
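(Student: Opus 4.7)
The plan is to prove the two claims of the theorem—the space bound and the query time—separately, leaning heavily on the structural results already established.

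For the space bound, I would first unroll the definition of $\bigds$: it consists of $\tau$ levels $\Delta_0,\dots,\Delta_\tau$, and level $\Delta_i$ stores, for every $P \in \Gamma$, every key $\keyI{i}{P_t}$ with $P_t \in \delsnapI{i}{P}$, together with a pointer list back to the contributing diagrams. By \lemref{enumerating}, $|\delsnapI{i}{P}| = O(5^m)$. Each key is a sparse vector with at most $m$ nonzero coordinates, so it can be stored in $O(m)$ space. Multiplying yields $O(nm5^m)$ per level, and summing over the $\tau$ levels gives the claimed $\spcone$ bound. The auxiliary pointer lists at each key add only $O(n5^m)$ per level, which is absorbed.

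For the query, the algorithm is: find the largest index $i^{\ast}$ such that $\keyI{i^{\ast}}{\Q_{i^{\ast}}}$ appears among the keys stored in $\Delta_{i^{\ast}}$, and return an arbitrary diagram listed under that key. Correctness is immediate from \lemref{nn}: any diagram colliding with $Q$ at the deepest occupied level is a six-approximation of the true nearest neighbor. The critical enabling fact for efficiency is \lemref{weakCollision}: the predicate ``$Q$ has a collision at $\Delta_i$'' is monotone in $i$ (if it holds at $\Delta_j$, it holds at every $\Delta_i$ with $i<j$), so we may binary search over the $\tau+1$ levels to locate $i^{\ast}$.

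At each probed level $i$, I would (i) construct $\Q_i$ by scanning the at most $m$ points of $Q$ and discarding those within $\tfrac{1}{2}\delta_i$ of the diagonal, (ii) compute $\keyI{i}{\Q_i}$ as a sparse vector in $O(m)$ time, and (iii) perform a dictionary-order binary search for this key in the sorted list of $\Delta_i$. Since $\Delta_i$ contains $O(n 5^m)$ sorted keys and each key comparison inspects $O(m)$ coordinates, one probe costs $O(m\log(n5^m)) = O(m(\log n + m))$. With $O(\log \tau)$ probes, the total query time comes out to the claimed $\qtone$.

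The main obstacle is twofold. First, I need to make sure the sparse-vector comparisons really cost $O(m)$ rather than something depending on the lattice complexity $\eta^2$; this is where the sparse representation mentioned right after \defref{bigDS} and the $O(m)$ support of $\keyI{i}{\Q_i}$ are essential. Second, I need to verify that the binary search over levels is justified \emph{as stated}: monotonicity of collisions is provided by \lemref{weakCollision}, and constant-time access to any $\Delta_i$ is granted by \defref{bigDS}, so the bookkeeping is clean. Everything else—constructing $\Q_i$, computing the key, looking it up, and reading off a witness diagram—is routine and is dominated by the search cost above.
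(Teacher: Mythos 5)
Your proposal follows the same approach as the paper's proof: space is bounded by $\tau$ levels $\times$ $n$ diagrams $\times$ $O(5^m)$ snap-roundings (via \lemref{enumerating}), correctness of the six-approximation follows from \lemref{nn}, and query efficiency comes from binary search over levels justified by the monotonicity in \lemref{weakCollision}, with per-level cost $O(m\log(n5^m)) = O(m\log n + m^2)$ from sparse $O(m)$-support keys and $O(m)$ key comparisons. The only slight elaboration beyond the paper is your explicit accounting of the $O(m)$ per-key storage and the pointer lists, which are consistent with (and absorbed into) the stated bound.
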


\begin{remark}
We note that exponential search could replace binary search for both finding
the last $\Delta _i$ where $\keyI{i}{\Q _i}$ collides with another key as well as
on each $\Delta _i \in \bigds$.
If $i$ is the largest~$\Delta _i$ such that the snap-rounding of
$\Q _i$ collides with another key, and~$\gamma$ is the index of the
key in $\Delta _i$ that collided with $\keyI{i}{\Q _i}$ then the query
time becomes $\qtexpsearch$.
\end{remark}

Finally, we prove that this data structure can provide responses to queries requesting
the $k$-nearest neighbors. Specifically, the $k$-nearest neighbors returned
are a $24$-approximation of the $k$th nearest neighbor.

\begin{corollary}[$k$-Nearest Neighbor Query]
Let $\Gamma, Q,$ and $\Q_i$ be as defined in \lemref{weakCollision}.
Let $n=|\Gamma|$ and let $\bigds$ be the multi-level structure described
in \defref{bigDS} with $\tau$ levels.
There exists a data structure of size $\spcone$
that supports finding $k$ diagrams that are each, in the worst case, a
$24$-approximation of the $k$th nearest neighbor of $Q$ from $\Gamma$ in
    \mbox{$\qtknn$ time}.
\label{cor:approxkNN}
\end{corollary}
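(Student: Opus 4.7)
My plan is to reuse the data structure $\bigds$ from \defref{bigDS} without any modification, so the $\spcone$ space bound carries over directly from \thmref{approxnn}. The only new content is the query procedure and its analysis.

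For the query, I would perform a binary search over the $\tau$ levels of $\bigds$ to locate the largest index $i^{*}$ for which the bin indexed by $\keyI{i}{\Q_i}$ in $\Delta_i$ contains at least $k$ distinct diagrams from $\Gamma$. This search is justified because \lemref{weakCollision} guarantees that if a diagram $P$ collides with $Q$ at level $j$, then $P$ also collides with $Q$ at every coarser level; hence the number of distinct colliders is non-increasing in $i$, and the set of levels admitting at least $k$ colliders is a downward-closed prefix of $\{0,\ldots,\tau\}$. Since \defref{bigDS} already attaches to every stored key both the list of source diagrams and its cardinality, once I locate $\keyI{i}{\Q_i}$ in $\Delta_i$ I can test the predicate ``at least $k$ colliders'' in $O(1)$ time. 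Having identified $i^{*}$, I report any $k$ diagrams from that bin, and \lemref{knnApprox} certifies each of them as a $24$-approximation of $d_B(P^k,Q)$.

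For the running time, the outer binary search visits $O(\log \tau)$ levels. At each visited level $i$ I compute $\keyI{i}{\Q_i}$ in $O(m)$ time (snap-rounding followed by thresholding near the diagonal), then binary-search for it inside $\Delta_i$, which contains $O(n 5^m)$ keys with each comparison costing~$O(m)$. This yields $O(m + m\log(n 5^m)) = O(m\log n + m^2)$ work per visited level, exactly as in \thmref{approxnn}. Reporting the $k$ diagrams from the final bin contributes an additional $O(k)$, producing the total query time $\qtknn$ claimed in the corollary.

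The step I expect to be the main obstacle is the justification that a single binary search over the levels correctly locates $i^{*}$: this does not follow from either \thmref{collision} or \lemref{nn} in isolation and depends crucially on strengthening the existence-based monotonicity used in \thmref{approxnn} to monotonicity in the \emph{count} of colliders. Once that upgrade is in place via \lemref{weakCollision}, the remainder of the proof is a routine adaptation of the nearest-neighbor argument, with the approximation factor inflating from $6$ to $24$ because, by \corref{kNNLevels}, $P^k$ may have its last collision with $Q$ as many as two levels \emph{above} $i^{*}$ rather than merely two levels below it.
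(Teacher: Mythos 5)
Your proposal matches the paper's proof essentially step for step: reuse $\bigds$ unchanged so the space bound $\spcone$ carries over; binary-search the levels for the largest $i$ with at least $k$ colliders, using the per-key counts stored in $\bigds$ to test the threshold in $O(1)$; return any $k$ diagrams from that bin and invoke \lemref{knnApprox} for the $24$-approximation; and account for the $O(\log\tau)$ outer iterations, $O(m\log n + m^2)$ per-level cost, and $O(k)$ reporting, giving $\qtknn$. The one place you go slightly beyond the paper is in explicitly justifying why a single binary search over levels is valid — the paper simply asserts ``we can still utilize binary search,'' whereas you correctly observe that this needs monotonicity of the \emph{count} of colliders, not just existence, and that \lemref{weakCollision} supplies exactly that (the collider set at level $i$ contains the collider set at every finer level $j>i$, so the count is non-increasing in $i$ and the ``$\ge k$ colliders'' predicate holds on a prefix). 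That is a small but genuine improvement in rigor over the paper's own write-up; your diagnosis of where the factor jumps from $6$ to $24$ (via \corref{kNNLevels}, $P^k$'s last collision may sit up to two levels \emph{finer} than $i$) is also accurate.
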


\section*{Space Complexity Discussion}
While searching $\bigds$ is logarithmic in the number of diagrams, the data
structure becomes very large when the diagrams have even a moderate number of
points. For example, with~$m = 15$, we may have over $2^{34}$ keys at a given
level.
In this section, we discuss approaches intended to mitigate the size of the
data structure and explain why many become unrealistic in practice.

One way to reduce the size of each level is to ``flip'' the key generation and
querying.
In particular, instead of generating many keys for each diagram in
$\Gamma$, we compute one for each diagram. Then, for a query diagram~$Q$,
we compute and search~$O(5^m)$ keys at each level.
This may be practical for scenarios in which diagrams
in $\Gamma$ are large, but the query diagrams are small. Moreover, since
searching for a key is independent of the other keys, searching in parallel
follows naturally.
Furthermore, this approach reduces the size of $\bigds$ from $O(n5^m\tau)$
to $O(nm\tau)$.
However, this trades exponential space for exponential search time relative
to the number of points in the diagram.
Preliminary work on more complex snapping schemes have shown that we can
reduce the size of the data structure.  While the size is still exponential in
$m$, some of the more promising schemes have been able to reduce the expected
size to $O(2.6^m)$ and increase the approximation factor to a larger
constant value.
The next natural approach to pursue is probabilistic snap-rounding.

Consider generating a single key for each diagram in $\Gamma$ by snap-rounding
each diagram a single time at each grid resolution.
If we randomly snap-round our query diagram a polynomial number of times,
we may never collide with a diagram with small bottleneck distance since
only one of the~$O(5^m)$ snap-roundings may yield a match.
Another approach is defining a match to be when a certain ratio of points
collide.
However, we can run into situations where we return diagrams with
exceptionally large bottleneck distance.
For example, consider a class of diagrams with $k$ off-diagonal points
where for any two diagrams $\alpha$ and $\beta$ in this class,
$d_B(\alpha, \beta)$ is large, but if a subset of $k-1$ points from each
are considered, the bottleneck distance is small.
Then, if we query with a diagram $Q$ in which $k-1$ points from $Q$ have a small
bottleneck distance to $k-1$ points of
any diagram in the class we will return a match for every diagram.
However, the actual bottleneck distance between the query diagram and any diagram
in this class can be arbitrarily large.

Finally, while our space complexity is exponential in
diagram size, our queries
do not rely on probabilistic snap-roundings and the problems discussed
above resulting from these approaches.
We conclude this discussion
by offering an analysis of the complexity of our data structure,
in practice, relative to similar approaches used on polygonal curves
from \cite{DS17} and demonstrate a substantial decrease in size.
While we are comparing approaches for different problem domains, our
goal is to draw attention to the complexity related to the number
of points in the polygonal curves and diagrams to emphasize the
challenges related to reducing data structure size.
Specifically, for a data structure, storing $n$ curves in $\R^2$,
each with complexity at most~$m=15$, the constant
factor approximation from \cite{DS17}
requires $O(2^{120}n\log n + 15n)$ space; whereas, our
approach to storing $n$ diagrams with at most $15$ off-diagonal points
requires $O(n5^{15}\tau)$~space.
However, we note that our approach depends on the distribution
of points in the diagrams which dictates the size of $\tau$ while the
approach found in \cite{DS17} is independent of the distribution of curves.
\section{Concluding Remarks}
In this paper, we address the problem of supporting
approximate nearest neighbor search for a query persistence diagram among a finite
set $\Gamma$
of $(M,m)$-bounded persistence diagrams.
To the best of our knowledge,
this result is the first to introduce a method of searching a set
of persistence diagrams with a query diagram with performance guarantees
that does not require a linear number of bottleneck distance computations.
We utilize ideas from locality-sensitive hashing
along with a snap-rounding technique to generate keys
for a data structure which supports searching $\bigds:=\bigds(\Gamma)$
which has $\tau$ levels.
Specifically, when $|\Gamma|=n$,
the search time for an $(M,m)$-bounded query diagram is $\qtone$ and returns an
approximate nearest persistent diagram within a factor of six.
Additionally, searching for $k$ approximate
nearest neighbors can be done in $\qtknn$ and each of the diagrams are
within a factor of twenty-four of the $k$th nearest neighbor.

For simplicity, we assumed that none of the points in the diagrams of $\Gamma$
are on grid lines.  To handle points on grid lines, we add additional keys.
More specifically, for a diagram~$P \in \Gamma$ and a grid $\latticeI{i}$ in
which a point $p \in P$ is on a grid point in~$\latticeI{i}$,
we snap $p$ to its nine nearest
grid points. If $p$ is on a grid edge of $\latticeI{i}$ (and not
a grid point) we snap to~$p$ to its six nearest grid points.  While the additional
keys increase the size of $\delsnapI{i}{P}$, the space complexity of storing or
time complexity of querying $\bigds$ does not change.

This paper is just one of the first steps towards practical searches in
the space of persistence diagrams.
Future work consists of an implementation of the data structure,
exploring additional representations of the persistence diagrams
to employ more techinues from locality-sensitive hashing to
reduce space and time complexity, and using techniques
in this paper to expand the results in \cite{DS17}.

\begin{acknowledgements}
    We thank Michael Kerber for his discussions about early drafts
    of this paper and future work,
    Donald R. Sheehy for directing us to additional searching approaches and
    feedback on early drafts,
    and Brendan Mumey for the discussions related to this
    research and pointing out
    the paper by Heffernan and Schirra \cite{heffernan1994approximate}.
\end{acknowledgements}

% Authors must disclose all relationships or interests that
% could have direct or potential influence or impart bias on
% the work:
%
% \section*{Conflict of interest}
%
% The authors declare that they have no conflict of interest.

% BibTeX users please use one of
%\bibliographystyle{spbasic}      % basic style, author-year citations
\bibliographystyle{spmpsci}      % mathematics and physical sciences
\bibliography{references}

\end{document}